\journal{X}
\begin{document}

\begin{frontmatter}

\title{Bundle Fragments into a Whole: Mining More Complete Clusters via Submodular Selection of Interesting webpages for Web Topic Detection}

\author[a]{Junbiao Pang}
\ead{junbiao\_pang@bjut.edu.cn}

\author[a]{Anjing~Hu}
\ead{asuka004@emails.bjut.edu.cn}

\author[a,b]{Qingming~Huang}
\ead{qmhuang@ucas.ac.cn}

\address[a]{Faculty of Information Technology, Beijing University of Technology, No.100 Pingleyuan Road, Chaoyang District, Beijing 100124, China}

\address[b]{School of Computer and Control Engineering, University of Chinese Academy of Sciences,\\ No.19 Yuquan Road, Shijingshan District, Beijing 100049, China}

\begin{abstract}

Organizing interesting webpages into hot topics is one of key steps to understand the trends of multimodal web data. A state-of-the-art solution is firstly to organize webpages into a large volume of multi-granularity topic candidates; hot topics are further identified by estimating their interestingness. However, these topic candidates contain a large number of fragments of hot topics due to both the inefficient feature representations and the unsupervised topic generation. This paper proposes a bundling-refining approach to mine more complete hot topics from fragments. Concretely, the bundling step organizes the fragment topics into coarse topics; next, the refining step proposes a submodular-based method to refine coarse topics in a scalable approach. The propose unconventional method is simple, yet powerful by leveraging submodular optimization, our approach outperforms the traditional ranking methods which involve the careful design and complex steps. Extensive experiments demonstrate that the proposed approach surpasses the state-of-the-art method (\textit{i.e.}, latent Poisson deconvolution~\cite{pang-tao-lpd-icme-2016}) 20\% accuracy and 10\% one on two public data sets, respectively.

\end{abstract}

\begin{keyword}
Hot Topic Detection, Poisson Deconvolution, Submodularity, Random Walks, Scalability

\end{keyword}

\end{frontmatter}


\section{Introduction}\label{sec:intro}

With the boom of web technologies, recently it is convenient for people to share data on social media websites which successfully facilitates both the generation and the propagation of User Generated Content (UGC)~\cite{pang2013unsupervised}. Users are constantly struggling to keep
up with the ever-increasing amounts of content that is
being published every day. Extracting the hot yet interesting topics from a sea of webpages is an efficient approach to handle the \emph{information overload} problem.

Hot topics provide an easy way to abstract the public concerns, social trends and urgent events.
However, the unprecedented explosion of UGC data has made it difficult for web users to quickly access hot web topics~\cite{bakshy-messing-adamic-science2015}. Driven by such pressing requirement, topic detection from web~\cite{Cao-Ngo-Zhang-Li-2011}~\cite{pang-tao-lpd-icme-2016}~\cite{Zhang-Li-Chu-Wang-Zhang-Huang-2013}~\cite{pang-tao-nucom-describe-topic-2017} is such an effort to discover a tiny fraction of interesting webpages strongly connected by a seminal event from a sea of social media~\cite{pang2013unsupervised}. We believe that the ability to remove the uninteresting webpages yet organize the interesting ones into different seminal events lies at the basis of understanding trends.

\newtheorem{myobr}{Observation}
\newtheorem{mythe}{Theorem}
\newtheorem{mypro}{Proposition}
\newtheorem{mydef}{Definition}
\newtheorem{myremark}{Remark}

Discovering hot web topics is like looking for a needle in a haystack. Because only a small number of the interesting webpages could evolve into the hot topics eventually. To handle this challenge, the Detection-By-Ranking (DBR) approach~\cite{pang2013unsupervised} is proposed. Firstly, a huge number of the multi-granularity topic candidates is generated, attempting to hit all hot topics in a brute-force approach; secondly, the interestingness of the topics is estimated to discover hot topics. However, due to the lack of the supervised information, the multi-granularity approach inevitably generates the fragments of hot topics~\cite{kamnitsas-SSL-imcl-18} which naturally lead to the inferior performances~\cite{pang2013unsupervised}~\cite{pang-tao-lpd-icme-2016}.
Because Poisson Deconvolution (PD)~\cite{pang2013unsupervised} empirically tends to rank fragments of hot topics ahead. The essence of this problem is originally from the unsupervised approach.

A naive idea attempts to generate complete topics in presence of noises via advantaged clustering methods, \emph{e.g.}, low-rank models~\cite{lin-aimi2016}. However, a sea of uninteresting webpages would make the advantaged methods ineffective. For instance, the noise robust spectral clustering~\cite{aleksander-yves-kdd17} is ineffective to find clusterings when about 96\% noises are presented. Moreover, both the multi-language
words and the user-defined abbreviations make the dimension of the features from textual modality very high. For instance, Term Frequency-Inverse Document Frequency
(TF-IDF) from both Chinese and English has about 54,000 dimensions, which make the univariate projection based method~\cite{Maurus-kdd16} fail.

In this paper, we seek a \emph{fragment-to-whole} and \emph{coarse-to-fine} method, based on four motivations. Firstly, bundling the fragment topics into a whole would produce complete yet coarse topics; this avoids the drawback of the NMS~\cite{hosang-learning-nms-17}~\cite{he-uncertainty-loss-nms-19}. Secondly, refining a topic into the accurate, coherent and interpretable~\cite{pang-tao-cpd-tnnls-19} one is expected. Thirdly, we avoid the open problem of discovering clusterings in a sea of noises~\cite{Maurus-kdd16}. Fourthly, without time-complexity optimization, a scalable approach is desired. In summary, we avoid the open problem of clustering in a sea of noises by putting fragment topics into refined ones in a scalable approach.

Therefore, we propose the Bundling-Refining (BR) method as follows:
 \begin{itemize}
\item \emph{Bundling:} This step agglomerates topics into a coarse one (more details will be discussed in Section~\ref{sec:subsect:bundling}). This step produces a over-complete topics from the fragments, and reduces the number of false detected topics at same time.
\item \emph{Refining:} This step models the interestingness of webpages, and then removes uninteresting webpages in a scalable approach. Concretely, a over-complete topic is firstly represented as a graph, where the interestingness of a webpage is modeled as the stationary distribution.
\end{itemize}

To the best of our knowledge, this is the first to mine a whole topic from fragments for topic detection on web. The proposed method is computationally simple, and yet exceptionally powerful. Simply by greedily selecting a set of webpages, without complex training process, we find a novel clustering method that exceeds or meets the traditional approach~\cite{pang2013unsupervised}~\cite{pang-tao-lpd-icme-2016} on two public data sets.

The rest of this paper is organized as follows: Section~\ref{sec:relatedwork} reviews the related work. We describe the details of our approach in Section~\ref{sec:lsPD}. Experimental results are presented in Section~\ref{sec:results}, and the paper is concluded in Section~\ref{sec:conclusion}.

\section{Related Work}\label{sec:relatedwork}

\textbf{Web Topic Detection.} Webpages are the typical heterogeneous data. As a result, many literatures have considered topics as clusterings from the multi-modal data~\cite{zhang2024kbs}~\cite{seema2023cmtd}~\cite{PARK2023esa-hottext-topic}. There are two important research threads. One adapts a single-modal clustering algorithm to the multi-modality~\cite{liu2018esa-text-image}, and the other is the similarity graph method~\cite{papadopoulos-11}, where the multi-modalities data are fused into the edges of a graph.

In the former case, topic detection extends the single-modality based methods into multi-modal data. For example, multi-modal LDA~\cite{liu2018esa-text-image} was proposed to group images with tags into topics. In the similarity graph method, multi-modal cues were fused into edges of a similarity graph. For instance, Cao \emph{et al.}~\cite{Cao-Ngo-Zhang-Li-2011} first generate events on video tags via clustering, and then link these clusterings into topics by the textual-visual similarity. As a comparison, the fused similarity graph was computationally simple and easily extendable to the other graph-based algorithm~\cite{pang-hu-interretation-topic-TCY-19}.

The DBR method~\cite{pang-tao-lpd-icme-2016}~\cite{pang-tao-nucom-describe-topic-2017} belonged a similarity graph approach. Although a large number of methods has been proposed to clustering topics~\cite{yang2012clustering}~\cite{Zhang-Li-Chu-Wang-Zhang-Huang-2013}, a few work notices that many clusters are the fragments of hot topics.

Classical topic models have been proposed to infer hidden themes for document analysis, including Latent Dirichlet Allocation (LDA)~\cite{blei-jmlr-03}, probabilistic Latent Semantic Analysis (pLSA)~\cite{hofmann-plsa-99} and various variations. These topic models generally work well on long and structured documents~\cite{baldwin-CoNLL-2009}. Moreover, these models have assumed that each webpage must belong to a topic. For instance, a subset of twitters was firstly retrieved by a set of predefined keywords, and then LDA was used to organize these twitters into topics~\cite{Shi-kan-www-short-text-18}. On the contrary, our task is to mine a few hot topics in a sea of noise webpages~\cite{qian-zhang-xu-mm16}. {\color[RGB]{0,0,255} Recently, ~\cite{wang-2023-text2topic-EmNLP} have used the pre-trained models based on Transformer~\cite{vaswani-2017-transformer-nips} obtain the zero-shot text embedding feature to classifier topic.  }

\textbf{Clustering In a Sea of Noise.} The ideal approach for hot topics detection from web was to cluster in a sea of noise~\cite{Maurus-kdd16}. We also empirically find that only a small fraction of interesting webpages (about 3\%) have evolved into hot topics~\cite{lin2022-cfskd-conf}. Obviously, the partition-based methods are impossible to deal with this problem since this approach assumes that every data point must belong to a cluster~\cite{ACM-data-density-23}.

There has been a trend of introducing noise-resistant clustering techniques into web topic detection. For instance, Li et al~\cite{li-joo-qi-zhu-tmm2016} utilized Swendsen-Wang cuts to group topics, and Zhang et al~\cite{Zhang-Li-Chu-Wang-Zhang-Huang-2013} used the graph shift to group a few webpages into hot topics.

In fact, web topics are not equivalent to conventional clusters or the zero-shot topic classification ~\cite{wang-2023-text2topic-EmNLP} since the web topics faces the more semantically weak feature representation than that of other scenarios, \emph{e.g.}, high-dimension data~\cite{pandove-goel-rani-review-high-dimension-18}, budgeted optimization~\cite{byrka-pensyl-rybicki-srinivasan-trinh-budgeted-17}.
These clustering methods naturally produce fragments of hot topics.

\section{The Proposed Bundling-Refining Method}\label{sec:lsPD}

In this paper, for a set of webpages $\{\mathbf{x}_1,\ldots,\mathbf{x}_N\}$, the visual feature $f^{vis}$ and the textual feature $f^{txt}$ are firstly extracted from each webpage. Two similarity matrices $W^{vis}$ and $W^{txt}$ are constructed from the visual feature and the textual one, respectively\footnote{In the following, we ignore either the subscript or the superscript in the different context, if it does not cause any confusion.}.

\subsection{Ranking Multi-granularity Topic Candidates}

Following the approach in~\cite{pang-tao-nucom-describe-topic-2017}, we convert the matrix $W$ into the affinity one $\hat{W}$ by Gaussian kernel, \emph{i.e.}, $\hat{W}_{ij}=\exp\left(-\|W_{ij}\|_2^2/\sigma^2\right)$ where $\|\cdot\|_2$ denotes the $\ell_2$ norm, and $\sigma^2$ denotes the deviation. Once these affinity matrices are constructed, we build the visual and the textual $k$-Nearest Neighbor Graphs ($k$-N$^2$Gs), {\color[RGB]{0,0,255} where the $k$ nearest neighbors of each webpage truncate the affinity matrix $\hat{W}$ into the spare one $A$ as follows}:
\begin{eqnarray}\label{eqt:knngraph}
G^{vis}&=(V,E^{vis},A^{vis}),\\
G^{txt}&=(V,E^{txt},A^{txt}),
\end{eqnarray}
where vertex $V$ denotes webpages, edge $E^{vis}$ ($E^{txt}$) represents whether two webpages are connected by the visual (textual) modality or not, and matrix $A^{vis}$ ($A^{txt}$) describes the closeness of two webpages determined by the visual (textual) features. Subsequently, a mixed graph is obtained as follows:
\begin{equation}\label{eqt:hybridgraph}
G=\left(V,E, A \right),
\end{equation}
where $E=E^{vis}\cup E^{txt}$, and $A=(A^{vis}+A^{txt})/{2}$. {\color[RGB]{0,0,255} $E=E^{vis}\cup E^{txt}$ means that either the individual modality or the multi-modalities could be used to model the similarity between two webpages.}

By Similarity Cascade (SC)~\cite{pang2013unsupervised}, the multi-granularity topics $C_k$ ($k=1,\ldots,K$) are generated from the mixed graph $G$. A topic $C_k$ is formally represented as follows:
\begin{equation}
C_{k}=c^{\top}_k{\circ} c_k,
\end{equation}
in which the indicator vector $ c_k \in \{0, 1\}^{1\times N}$, where $1$ or $0$ means that whether the topic $C_k$ contains the webpage $\mathbf{x}_i$ or not. The operation $\circ$ means that the diagonal of the matrix $c^{\top}_k c_k$ is set to zero. The weight of a topic $\mu_k$ is estimated as follows:
\begin{equation} \label{eqt:poissondeconvolution}
\begin{split}
w_{ij}&\sim \text{Poisson}(a_{ij})\\
s.t.: & \  \ w_{ij} = \sum_{k=1}^K \mu_k C_{k_{ij}}.
\end{split}
\end{equation}
The interestingness of a topic is estimated as \mbox{$i_k=\mu_k\cdot |C_k|$}, where $|C_k|$ is the number of webpages in a topic $C_k$. Therefore, we obtain the interestingness order of topics \mbox{$\mathcal{L}=i_1\prec \ldots i_k \ldots \prec i_K$}. For more details about the PD-based ranking, please refer to~\cite{pang2013unsupervised}.

\subsection{Bundling Fragments into a Coarse Topic}\label{sec:subsect:bundling}

We assume that every topic is a fragment of some hot topic. That is, each topic should at least be bundled into a coarse one. Given the order list $\mathcal{L}$ and the bundling window $W$, Jaccard distance between the \mbox{$k$-th} topic and the subsequent one \mbox{$j$-th} ($j \in\{ k{+}1,\ldots,k{+}W\}$) is computed as,
\begin{equation}\label{eqt:bundle}
J_{kj}= \frac{|C_k \bigcap C_j|}{|C_k \bigcup C_j|},
\end{equation}
where $\bigcap$ and $\bigcup$ respectively represent the intersection operation and the union operation, and $|\cdot|$ denotes the number of webpages in a set. If $J_{kj}$ is large than a predefined threshold $\tau$, we bundle the $j$-th topic and the $k$-th one into a coarse topic, \emph{i.e.}, $\widehat{C}_k=\bigcup C_i$ ($i\in\{k,\ldots, k{+}W\}$). Next, we replace the topic $C_k$ with the bundled one $\widehat{C}_k$, as illustrated in Fig.~\ref{fig:bundling}.

\begin{figure}[t!]
\centering
\includegraphics[width=.33\textwidth]{./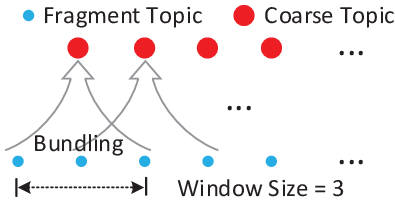}
\caption{Bundling fragment topics into a coarse one. In this example, the size of the bundling window is 3.}
\label{fig:bundling}
\end{figure}

Theoretically, the bundling process has a time complexity of $O(K{\cdot}W)$ ($W{\ll} K$). In practice, $W$ is frequently smaller than 50. Therefore, the bundling process is quite efficient. However, the uninteresting webpages in the fragments are inevitably merged into the coarse topics. In this paper, the interestingness of webpages is estimated in Section~\ref{sec:sub:interestingness}, and a submodular-based approach is proposed to refine coarse topics in Section~\ref{sec:sub:refinging}.

\subsection{Modeling Interestingness by Stationary Distribution}\label{sec:sub:interestingness}

An interesting webpage is defined as the likelihood of that people are attracted. If the correlations among webpages are encoded as a graph, the stationary distribution measures the interestingness of a webpage~\cite{page-brin-motwani-pagerank-99}. However, the structure information between webpages (\emph{e.g.}, link) is unavailable. We convert a coarse topic into a graph with the similarities among webpages.

The reconstructed similarity between webpages is computed as follows:
\begin{equation}
\widehat{S}_{ij} = \sum_{C_k\in \widehat{C}}\mu_kC_{k_{ij}},
\end{equation}
where $\widehat{S}_{ij}$ ($\widehat{S}_{ij}\in \mathbb{R}^{|\widehat{C}|\times |\widehat{C}|}$) denotes the point-wise similarities between the webpages from the coarse topic $\widehat{C}$. Rather than using the original similarity $A$, $\widehat{S}$ is considered as a refined similarity~\cite{pang-tao-nucom-describe-topic-2017}.

Consequently, a graph \mbox{$\widehat{G}=(\widehat{V},\widehat{E},\widehat{S})$} is constructed, where node $\widehat{V}$ represents the webpages in the coarse topic $\widehat{C}$, and edge $\widehat{E}$ denotes the connectivity between these webpages, \emph{i.e.}, if~$\widehat{S}_{ij}>0$, \mbox{$\widehat{E}_{ij}=1$}; otherwise, $\widehat{E}_{ij}=0$.

Assuming a random walker travelling on the given graph $\widehat{G}$, the probability of a transition from node $i$ to node $j$ is $P_{ij}=\widehat{S}_{ij}/d_i$, where $d_i=\sum_j\widehat{S}_{ij}$ is the out-degree of the node $i$. $P_{ij}$ is the probability of the node $i$ being visited by the random walker at the $0$-th step.

Aiming at computing the stationary distribution $\pi$, random jumps~\cite{kleinberg-random-walk-ACM-99} are introduced into $P_{ij}$ as follows:
\begin{equation}\label{eqt:stationarydistribution}
\pi_j=\alpha\sum_{i\in \widehat{V}}\pi_i P_{ij} + (1-\alpha)\frac{1}{|\widehat{C}|},
\end{equation}
where $\alpha\in (0,1)$ denotes the decay factor that respects the probability with which a walker follows the connected edges, and $1{-}\alpha$ is the probability of jumping to a random node.~\eqref{eqt:stationarydistribution} is PageRank~\cite{page-brin-motwani-pagerank-99} algorithm, where $\alpha$ is usually recommended to range from 0.8 to 0.95~\cite{meyer-langville-pagerank-04}. In this paper, we set $\alpha = 0.9$ in~\eqref{eqt:stationarydistribution}.

By defining the new transition matrix as $\widehat{P}\leftarrow \alpha P+(1-\alpha)\mathbf{e}\mathbf{e}^\top/|\widehat{C}|$
(where the vector $\mathbf{e}$ is a column of ones), the stationary distribution can be efficiently solved by the power method~\cite{meyer-langville-pagerank-04} with the time complexity $O(T\cdot nz(P))$, where $T$ is number of iterations in the power method, and $nz(P)$ is number of non-zero elements in $P$. {\color[RGB]{0,0,255} In our experiment, the size of coarse topics $\widehat{C}$ range from 20 to 93. Therefore, the computation cost of~\eqref{eqt:stationarydistribution} is very small.}


\subsection{Refining a Coarse Topic by Submodular Selection}\label{sec:sub:refinging}

To remove uninteresting webpages from a coarse topic, we establish two criteria: 1) \textbf{Interestingness:} a hot topic consists of the interesting webpages. 2) \textbf{Similarity:} the interesting webpages in a hot topic should be more similar than that of the uninteresting webpages. By following the above criteria, an immediate solution is to select the interesting and similar webpages as follows:
\begin{equation}\label{eqt:naivemethod}
\begin{split}
\arg & \max_{I}  \sum_{i,j}(I_i\pi_i) \widehat{S}_{ij}(I_j\pi_j) \\
s.t.:& I\mathbf{e}^\top=k,
\end{split}
\end{equation}
where $I\in\{0,1\}^{1\times |\widehat{C}|}$ is the indicator vector, $\widehat{S}$ is the similarity matrix, and $k$ is the number of the selected webpages. However, the optimization of~\eqref{eqt:naivemethod} is equal to quadratic integer programming which is NP-hard problem.

In order to approximate~\eqref{eqt:naivemethod}, our approach selects a subset of webpages $\mathcal{P}$ as follows:
\begin{equation}\label{eq:goodnessFunction}
\arg \max_{\mathcal{P}} g(\mathcal{P} ) =\lambda\sum_{i \in \mathcal{P}}\pi_i - \sum_{i,j\in \mathcal{P}}\pi_i D_{ij} \pi_j,
\end{equation}
where $i$ is the index of the selected webpage $\mathbf{p}_i\in \mathcal{P}$, $D_{ij}$ is the dissimilarity between two webpages, and $\lambda>0$ is a parameter that trades off between the interestingness and the dissimilarity.

In~\eqref{eq:goodnessFunction}, the first term means that the selected webpages should have the high interestingness scores; while, the second term requires that the selected webpages should have small dissimilarity values between each other. Therefore, the definition of the dissimilarity $D_{ij}$ should follow the following property:
\begin{mydef}{[Property of $D_{ij}$ in~\eqref{eq:goodnessFunction}]}
Given three webpages $\mathbf{x}_i$, $\mathbf{x}_j$, and $\mathbf{x}_k$, as well as the corresponding dissimilarity values $D_{ij}$ and $D_{kj}$ between these webpages, if $\mathbf{x}_i$ is more similar to $\mathbf{x}_j$ than that of between $\mathbf{x}_k$ and $\mathbf{x}_j$, then dissimilarity values should satisfy that $D_{ij} < D_{kj}$.
\end{mydef}

By the definition of the dissimilarity $D$ in~\eqref{eq:goodnessFunction}, the goodness function $g(\mathcal{P} )$ approximates the objective function~\eqref{eqt:naivemethod}. In this work, considering that $\widehat{S}_{ij}\in (0,1)$, the dissimilarity $D$ is converted from $\widehat{S}$ by Gaussian kernel, \emph{i.e.}, \mbox{$D_{ij}= \text{exp}(-\widehat{S}^2_{ij}/ \sigma)$}, where $\sigma$ is the bandwidth to control the decay speed of the similarity $\widehat{S}$. In this paper, we set $\sigma= 10$.

\subsubsection{Submodular Selection of~\eqref{eq:goodnessFunction}}

We present the so-called diminishing return property of the goodness function~\eqref{eq:goodnessFunction} in Proposition~\ref{pro:diminishing}. By Proposition~\ref{pro:diminishing}, if we add more webpages into an existing subset $\mathcal{P}$, the goodness of~\eqref{eq:goodnessFunction} is non-decreasing.

\begin{mypro}\label{pro:diminishing}{[Diminishing of $g(\mathcal{P})$ in~\eqref{eq:goodnessFunction}]}
The goodness function in~\eqref{eq:goodnessFunction} has the following properties:
\begin{itemize}
\item (\text{P1: Submodularity.}) for any $\lambda>0$, the goodness $g(\mathcal{P})$ is submodular with respect to $\mathcal{P}$;
\item (\text{P2: Monotonicity.}) for any $\lambda\geq2$ and $\sum_{ij}D_{ij}=1$, the goodness $g(\mathcal{P})$ is monotonically non-decreasing with respect to $\mathcal {P}$.
\end{itemize}
\end{mypro}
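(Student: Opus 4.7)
The plan is to prove both properties by computing the marginal gain of $g$ in closed form and then reading off the diminishing-returns condition for P1 and the sign of the gain for P2. The interestingness term $\lambda\sum_{i\in\mathcal{P}}\pi_i$ is already modular, so all the structural content sits in the quadratic dissimilarity term; in particular, $\lambda$ will cancel entirely from the argument for P1 and will only control the monotonicity threshold in P2. Throughout I will rely on $D$ being symmetric (since $\widehat{S}$ is symmetric and $D$ is a Gaussian kernel of it), on $\pi$ being a probability distribution so that $\pi_i\in[0,1]$, and on $D_{ij}\geq 0$.

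For P1, the first step is to compute $\Delta_e g(\mathcal{P}) := g(\mathcal{P}\cup\{e\}) - g(\mathcal{P})$ for any $e\notin\mathcal{P}$. Collecting the two symmetric cross-terms $(i,e)$ and $(e,i)$ newly introduced in the quadratic form gives
\[
\Delta_e g(\mathcal{P}) \;=\; \pi_e\Bigl(\lambda - 2\sum_{i\in\mathcal{P}} \pi_i D_{ie}\Bigr).
\]
For any $\mathcal{P}\subseteq \mathcal{Q}$ with $e\notin\mathcal{Q}$, subtracting the analogous formula for $\mathcal{Q}$ yields
\[
\Delta_e g(\mathcal{P}) - \Delta_e g(\mathcal{Q}) \;=\; 2\pi_e \sum_{i\in\mathcal{Q}\setminus\mathcal{P}} \pi_i D_{ie} \;\geq\; 0,
\]
which is the diminishing-returns characterization of submodularity and holds for every $\lambda>0$.

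For P2 I would argue directly that $\Delta_e g(\mathcal{P})\geq 0$ for every $\mathcal{P}$ and every $e$. Using $\pi_i\leq 1$ I can bound $\sum_{i\in\mathcal{P}} \pi_i D_{ie}\leq \sum_i D_{ie}$, and then non-negativity of $D$ together with $\sum_{ij} D_{ij}=1$ gives the column-sum bound $\sum_i D_{ie}\leq 1$. Hence $\Delta_e g(\mathcal{P}) \geq \pi_e(\lambda-2) \geq 0$ whenever $\lambda\geq 2$. The main obstacle I expect is the bookkeeping that produces the factor $2$: it comes from the two symmetric cross-terms and is exactly why the threshold is $\lambda\geq 2$ rather than $\lambda\geq 1$. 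A related subtlety is whether the diagonal term $D_{ee}$ contributes to $\Delta_e g$; the paper's convention that $C_k = c_k^\top\!\circ c_k$ zeroes out diagonals (and that edges of $\widehat{G}$ require $\widehat{S}_{ij}>0$) strongly suggests self-pairs are dropped from the quadratic sum, which is what is needed for the stated threshold $\lambda\geq 2$ rather than a weaker $\lambda\geq 3$.
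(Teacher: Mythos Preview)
Your argument is correct and follows the same template as the paper: compute the marginal gain explicitly, read off diminishing returns for P1, and bound the gain from below for P2 using $\pi_i\le 1$, $D_{ij}\ge 0$, and $\sum_{ij}D_{ij}=1$. For P1 the two proofs are identical up to your use of the symmetry of $D$ to collapse the two cross-sums into a single sum with the factor $2$.

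For P2 there is a small difference in packaging worth noting. The paper treats the addition of an arbitrary disjoint set $\mathcal{P}_1$ to $\mathcal{P}_2$ and then bounds $\sum_j D_{ij}\pi_j$ using the entry-wise bound $D_{ij}\le 1$ (a consequence of $\sum_{ij}D_{ij}=1$) together with $\sum_j\pi_j\le 1$. You instead handle only single-element additions (which suffices for monotonicity of a set function) and bound $\sum_{i}\pi_iD_{ie}$ via the column-sum $\sum_i D_{ie}\le 1$. Both routes work; yours is shorter and makes the origin of the threshold $\lambda\ge 2$ transparent. Your remark about the diagonal term is also well taken: the paper's marginal-gain formula, like yours, silently drops any $\pi_e^2 D_{ee}$ contribution, consistent with the diagonal-zeroing convention used elsewhere in the paper.
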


\begin{proof}
We firstly prove (P1). For any $\mathcal{P}_1 \subset \mathcal{P}_2$ and a webpage $\mathbf{x} \notin \mathcal{P}_2$, we have
\begin{smaller}
\begin{equation}\label{eqt:dimensionP1}
\begin{split}
g(\mathcal{P}_1 \cup \mathbf{x}) {-} g(\mathcal{P}_1) &= \lambda \sum_{i \in (\mathcal{P}_1 \cup \mathbf{x} )} \pi_i - \sum_{(i,j) \in (\mathcal{P}_1 \cup \mathbf{x} )}\pi_i D_{ij} \pi_j\\
&-\left(\lambda \sum_{i \in \mathcal{P}_1} \pi_i - \sum_{(i,j) \in \mathcal{P}_1 }\pi_i D_{ij} \pi_j\right) \\
&= \lambda \pi_\mathbf{x} - \left( \sum_{i \in \mathcal{P}_1 }\pi_i D_{i\mathbf{x} } \pi_\mathbf{x} + \sum_{j \in \mathcal{P}_1 }\pi_\mathbf{x}  D_{\mathbf{x} j} \pi_j \right)
\end{split}
\end{equation}
\end{smaller}
Following~\eqref{eqt:dimensionP1}, we have $g\left(\mathcal{P}_2 \cup \mathbf{x} \right) - g\left(\mathcal{P}_2\right) =\lambda \pi_\mathbf{x} - \left( \sum_{i \in \mathcal{P}_2 }\pi_i D_{i\mathbf{x} } \pi_\mathbf{x}  + \sum_{j \in \mathcal{P}_2 }\pi_\mathbf{x}  D_{\mathbf{x} j} \pi_j \right)$.
Therefore, we have
\begin{equation*}
\begin{split}
&\left(g(\mathcal{P}_1 \cup \mathbf{x} ) - g(\mathcal{P}_1)\right)-\left(g(\mathcal{P}_2 \cup \mathbf{x} ) - g(\mathcal{P}_2)\right)\\
& = \sum_{i \in \mathcal{P}_2/\mathcal{P}_1 }\pi_i D_{i\mathbf{x} } \pi_\mathbf{x}  + \sum_{j \in \mathcal{P}_2/\mathcal{P}_1}\pi_\mathbf{x}  D_{\mathbf{x} j} \pi_j \geq 0,
\end{split}
\end{equation*}
which completes the proof of (P1).

Next, we prove (P2). Given any $\mathcal{P}_1\cap \mathcal{P}_2=\varnothing$, where $\varnothing$ means an empty set, we have
\begin{smaller}
\begin{equation}\label{eqt:diminisingP2}
\begin{split}
&g(\mathcal{P}_2\cup \mathcal{P}_1)-g(\mathcal{P}_2) =\lambda\sum_{i \in \mathcal{P}_1} \pi_i
-\left(\sum_{i,j\in \mathcal{P}_1}\pi_i D_{ij} \pi_j\right.\\
&+\sum_{i\in \mathcal{P}_1,j\in \mathcal{P}_2}\pi_i D_{ij} \pi_j+\left.\sum_{i\in \mathcal{P}_2, j\in \mathcal{P}_1}\pi_i D_{ij} \pi_j \right)\\
&=\lambda\sum_{i \in \mathcal{P}_1} \pi_i - \left(\sum_{i\in \mathcal{P}_1,j\in \mathcal{P}_1\cup\mathcal{P}_2 }\pi_i D_{ij} \pi_j
+ \sum_{i\in \mathcal{P}_2, j\in \mathcal{P}_1}\pi_i D_{ij} \pi_j \right)
\end{split}
\end{equation}
\end{smaller}
with the constraint $\lambda \geq 2$ and the inequality \mbox{$\sum a_ib_i\leq \sum a_i \sum b_i, (a_i\geq0, b_i\geq 0)$},~\eqref{eqt:diminisingP2} is relaxed as follows:
\begin{smaller}
\begin{equation*}
\geq \sum_{i\in \mathcal{P}_1}\pi_i\left(1 - \sum_{j\in \mathcal{P}_1\cup\mathcal{P}_2}D_{ij}\pi_j \right)+\sum_{j\in \mathcal{P}_1}\left(1 - \sum_{i\in \mathcal{P}_2}\pi_iD_{ij}\right)\pi_j
\end{equation*}
\end{smaller}
with the constraints $\sum_{ij}D_{ij}=1$, $\sum_i\pi_i=1$, the above inequality is:
\begin{equation*}
\geq \sum_{i\in \mathcal{P}_1}\pi_i\left(1 - \sum_{j\in \mathcal{P}_1\cup\mathcal{P}_2}\pi_j \right)+\sum_{j\in \mathcal{P}_1}\left(1 - \sum_{i\in \mathcal{P}_2}\pi_i\right)\pi_j\geq0.
\end{equation*}
which completes the proof of (P2).
\end{proof}

Based on the Proposition~\ref{pro:diminishing}, a subset of $k$ webpages is greedily selected by the following discrete derivative of $g(\mathcal{P})$:
\begin{equation}\label{eqt:discrete-discrete}
\begin{split}
\triangle(\textbf{p}|\mathcal{P}) &= g(\mathcal{P}\cup\{\textbf{p}\}) - g(\mathcal{P})\\
&=\lambda \pi_\textbf{p}- \left( \sum_{i \in \mathcal{P}}\pi_i D_{i\textbf{p}} \pi_\textbf{p} + \sum_{j \in \mathcal{P}}\pi_\textbf{p} D_{\textbf{p}j} \pi_j \right).
\end{split}
\end{equation}
In this paper, $\lambda$ is assigned as $2$. Alg.~\ref{alg:optimize-goodness} provides a greedy solution to adaptively refine topics with a $(1-1/e)$ near-optimal solution~\cite{nemhauser-wolsey-fisher-submodular-73}. The time complexity of the submodular selection in Alg.~\ref{alg:optimize-goodness} is $O(|\widehat{C}|\cdot M)$ where $M$ is the number of the webpages in a refined one.

\begin{figure}[t!]
\centering
\includegraphics[width=.51\textwidth]{./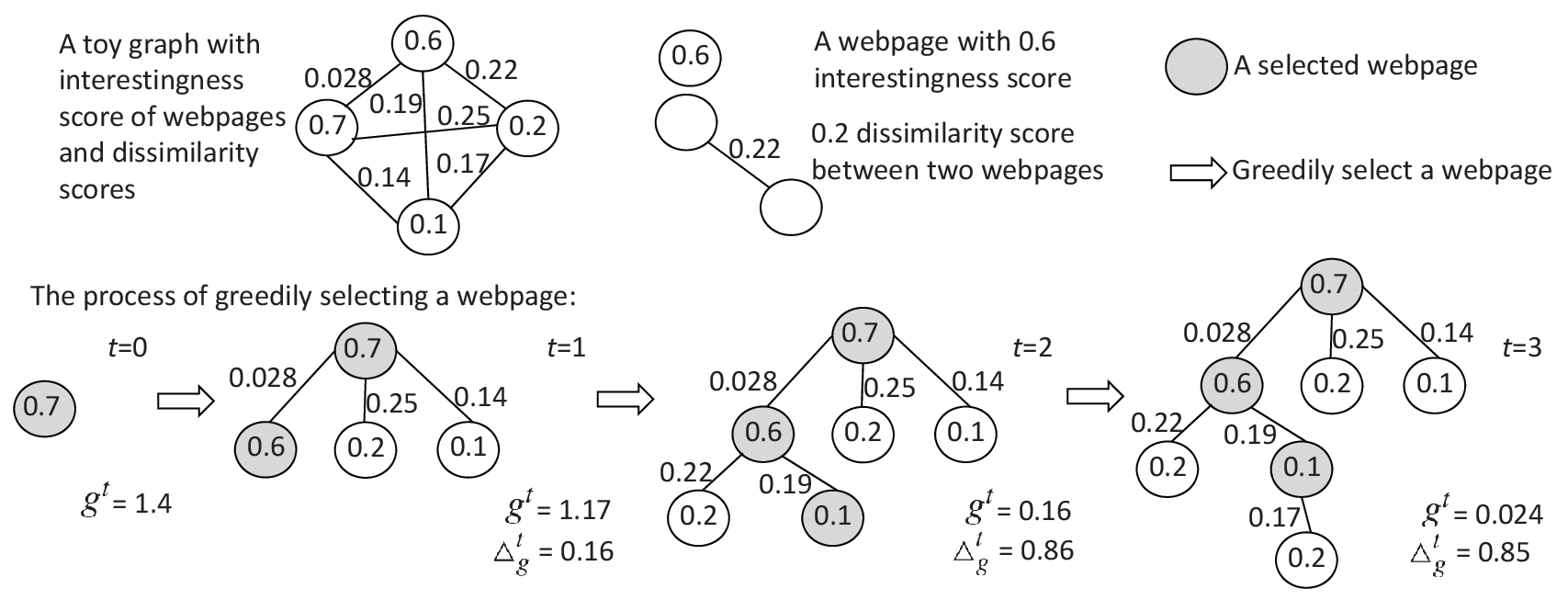}
\caption{A toy example illustrates Alg.~\ref{alg:optimize-goodness}.}
\label{fig:toyexample}
\end{figure}

\subsubsection{Determining the number of the selected webpages}

Considering that the size of different topics is very diverse, we present a heuristic approach based on the relative change of the goodness value in~\eqref{eqt:discrete-discrete} as follows:
\begin{equation}\label{eqt:threshold}
\triangle_g^t = \frac{g^t- g^{t+1}}{g^t},
\end{equation}
where $g^t$ is the maximal increase of the goodness value, $g^t=\max_i \{ \triangle(\mathbf{p}_i|\mathcal{P}) \}$ at the $t$-th selection. That is, $g^t$ measures the incremented goodness value when a webpage is added into a topic.
When the first noise webpage is being added into a topic at the $t{+}1$-th iteration, the value of $\triangle_g^t$ would be significantly increased. Because the goodness value $g^t$ is larger than that of $g^{t+1}$. Remark~\ref{rmk:boundedthreshold} proofs that $\triangle_g^t$ is bounded. To give a vivid illustration of Alg.~\ref{alg:optimize-goodness}, Fig.~\ref{fig:toyexample} use a 4-nodes toy graph to demonstrate how to greedily select the nodes by $\triangle_g^t$ at the $t$-th iteration.

\begin{myremark}\label{rmk:boundedthreshold}
The indicator $\triangle_g^t$ in~\eqref{eqt:threshold} is bounded, \emph{i.e.},  \mbox{$0\leq \triangle_g^t \leq 1$}.
\end{myremark}

Conventionally, we would select the webpages which have been selected before the maximal value $\hat{\triangle}_g^t$, that is, $\hat{\triangle}_g^t = \max_t\{\triangle_g^t\}$,($t = 1,\ldots, |\hat{C}|$). In this paper, we empirically use a margin $m$ ($m>0$) to increase the robustness of a topics. That is, the value $\triangle_g^t$ of a selected node should be larger than  $\hat{\triangle}_g^t+m$. Because the similarity between two webpages are easily corrupted by the sparse and noise textual feature. In this paper, we set $m = 0.1$.

{\color[RGB]{0,0,255} In summary, modeling interestingness utilizes pagerank~\eqref{eqt:stationarydistribution} to assign a high score to an interesting webpage; besides, these unimportant or error webpages introduced in the bundling step~\eqref{eqt:bundle} would be refined by submodular selection. Therefore, the error webpages are tended to be removed and barely influence the topic ranking results.} As Fig.~\ref{fig:toyexample} illustrated, by selecting the maximal value points, we would select two nodes with the interestingness scores with 0.7 and 0.6 as a refined topic.

\textbf{Time Complexity:} Given a list of topic candidates \mbox{$\mathcal{C}= C_0 \prec, \ldots, \prec C_K $},  the proposed BR method is summarized in Alg.~\ref{alg:AR3}. The time complexity of Alg.~\ref{alg:AR3} consists of both the bundling process and the refining one.

Concretely, \textbf{the bundling process} leads to a complexity of $O(K{\cdot}W)$ ($W{\ll} K$); while, the time cost of \textbf{the refining process} is $O(K(T\cdot nz(P) + |\widehat{C}|\cdot M))$. Usually, the number size of an unrefined topic is larger than $M$ (\emph{i.e.}, $M<|\widehat{C}|$). Therefore, the time complexity of the proposed BR is $O(K(T\cdot nz(P) + |\widehat{C}|\cdot M+W))$. {\color[RGB]{0,0,255} In practice, $T$ is smaller than 20, the maximum number of a coarse topic $M$ ($|\widehat{C}|> M$) is also smaller than 50, and the matrix $P$ is very sparse. Therefore, the proposed BR is quite efficient.}

\begin{algorithm}[t!]
\small
{
\SetAlgoLined
\caption{Refining a coarse topic}\label{alg:optimize-goodness}
\KwIn{The dissimilarity matrix $D\in\mathbb{R}^{|\widehat{C}|\times |\widehat{C}|}$, the weight $\lambda\geq 2$, and the score vector $\mathbf{\triangle_g} \in \mathbb{R}^{|\widehat{C}|\times1}$\;}
\textbf{Initialize:} $\mathcal{P}\leftarrow \varnothing $, and $\mathbf{\triangle_g} \leftarrow 1e-4$\;
\For{$t=0$ to $|\hat{C}|$}
{
Find $j=\arg\max_{i}(\triangle(\textbf{p}_i|\mathcal{P})|i=1,\ldots,N; i\notin \mathcal{P})$\;
Add $\mathbf{x}_j$ to $\mathcal{P}$\;
Update the value $\triangle_g^t$ by~\eqref{eqt:threshold}\;
}
Sort values $\triangle_g^t$ in a descended order, and find the index $\hat{t}$ with the maximal value, $\hat{t}=\max_t \triangle_g^t$\;
Group webpages $\textbf{p}_i$ ($i= 0,\ldots, \hat{t}$) into a refined topic\;
\KwOut{A refined topic.}
}
\end{algorithm}

\begin{algorithm}[t!]
\small
{
\SetAlgoLined
\caption{The proposed BR}\label{alg:AR3}
\KwIn{A list of topic candidates $\mathcal{C}=\{C_k\},k=1,\ldots, K$\;}
\textbf{Initialize:} $W \leftarrow 100$, $\tau \leftarrow 0.4$, and \mbox{$th\leftarrow$ $k$-ACV}\;
Compute the interestingness score $i_k$ by~\eqref{eqt:poissondeconvolution}\;

/* the \emph{Bundling} step */\\
\For {$k=0$ to $K$}
{
$\widehat{C}_k\leftarrow C_k$\;
\If {$J_{kl}$ $_{(l=k+1,\ldots,k+W)}$ $\geq \tau$}
{
 $\widehat{C}_k \leftarrow \widehat{C}_k \cup C_l$\;
}
Replace $C_k$ with $\widehat{C}_k$ in the list $\mathcal{C}$\;
}
Remove redundance in $\{\widehat{C}\}$ by NMS\;

/* the \emph{Refining} step */\\
\For {$k=0$ to $K$}
{
Build a graph from a coarse topic $\widehat{C}_k$\;
Compute the interestingness of webpages by~\eqref{eqt:stationarydistribution}\;
Refine the coarse topic $\widehat{C}_k$ by Alg.~\ref{alg:optimize-goodness}\;
}
\KwOut{A set of refined topics.}
}
\end{algorithm}

\section{Experiment}\label{sec:results}

\subsection{Datasets, Features and Evaluation Criteria}

\textbf{Dataset:} We evaluate our method on two public data sets, \emph{i.e.}, MCG-WEBV~\cite{Cao-Ngo-Zhang-Li-2011} and YKS~\cite{Zhang-Li-Chu-Wang-Zhang-Huang-2013}. {\color[RGB]{0,0,255} To our best knowledge, there are no publicly accessible web topic dataset with nearly 96\% noisy webpages which are not belongs to any topics. Both MCG-WEBV and YKS are very challenge to the approaches for clustering in a sea of noises. }
MCG-WEBV is built from the ``Most viewed'' videos of ``This month'' on YouTube from Dec. 2008 to Feb. 2009. For MCG-WEBV, the surrounding titles and the comments about the videos are considered as a set of words. YKS is a cross-media data set, where the meta data of YKS contains news articles on \emph{Sina} and the titles, the tags and the descriptions of web videos on \emph{YouKu}, from May 2012 to June 2012. The statistics of two data sets are summarized in Table~\ref{DatasetDetail}.

\begin{table}[t!]
\begin{tiny}
\caption{A summarization of MCG-WEBV and YKS.}
\label{DatasetDetail}
\begin{tabular}{|m{0.04\textwidth}<{\centering}|c|c|c|c|m{0.16\textwidth}<{\centering}|}
\hline
Dataset                                & \#Topic & \#Webpages  & \#Dictionary & \#Images  & The modalities used in the experiment                                                                                        \\ \hline
MCG-WEBV\cite{Cao-Ngo-Zhang-Li-2011} & 73      & 3,660       & 9,212        & 108,925 &  \begin{tabular}[c]{@{}l@{}}1. Keyframes of video clips;\\ 2. Titles, tags and descriptions.\end{tabular}                    \\ \hline
YKS\cite{Zhang-Li-Chu-Wang-Zhang-Huang-2013}                   & 318     & 18,399      & 54,620       & 71,063  &  \begin{tabular}[c]{@{}l@{}}1. Keyframes of web videos on Youku;\\ 2. Articles,titles and tags of news on Sina.\end{tabular} \\ \hline
\end{tabular}
\end{tiny}
\end{table}

Noise and sparsity are observed in two data sets. For instance, the sizes of dictionaries of both data sets are extremely large, \emph{e.g.}, 80,294 for YKS. Because the dictionaries of two sets contain not only the multi-language words but also user-defined abbreviations. Moreover, the average number of words in a webpage is also extremely small, \emph{e.g.}, 35 for MCG-WEBV and 228 for YKS. As a result, the textual features from social media are more noisier and shorter than news articles~\cite{zhao-ecir-2011}.

\textbf{Features:} In the pre-processing stage, YKS is tokenized by \emph{NLTK}\footnote{www.nltk.org} package. TF-IDF is used to encode the textual feature, and Fisher Vector (FV)~\cite{Sanchez-ijcv-2013} is used to encode the keyframes of video clips. Once keyframes are encoded, we use video signature~\cite{cheung-video-signature-tcsvt-2003} to compute the similarity between two video clips. The cosine distance is used to measure the similarity between the textual features. The $k$ of $k$-$N^2SG$s are assigned as 100 and 10 for both the textual graph and the visual one on MCG-WEBV, respectively. Since YKS is more noisy than that of MCG-WEBV, the $k$ of $k$-$N^2SG$s are set to 20 and 10 for the textual graph and the visual one, respectively.

During the experiments, SC~\cite{pang2013unsupervised} is assigned with a set of thresholds, $\{0.1,0.5,0.9\}$, and Non-negative Matrix Factorization with Random walk (NMFR)~\cite{yang2012clustering} is used to generate topics. The number of clusters for each threshold are assigned as $\{100,500,900,1300\}$. In the experiments, we have generated 4,240 topics for MCG-WEBV and 5,252 ones for YKS, respectively.

\textbf{Evaluation Criteria:} Top-10 $F_1$ versus Number of Detected Topics (NDT), and accuracy versus False Positive Per Topic (FPPT)~\cite{pang2013unsupervised} are used.


For top-10 $F_1$ versus NDT, if a detected topic is matched with the ground truth, {\color[RGB]{0,0,255} the averaged F1 scores of the top 10 detected topics} are averaged to measure the performance:
\begin{equation}\label{eqt:f1}
F_1=\frac{2\times Precision \times Recall}{Precision+Recall},
\end{equation}
where $ D$ is a detected topic, $G$ is a ground truth topic, and $|\cdot |$ denotes the number of the webpages in a topic. Therefore, $Precision=\frac{|D \cap G|}{|D|}$ is the precision, $Recall=\frac{|D \cap G|}{|G|}$ is the recall. However, top-10 $F_1$ versus NDT does not consider the influence of the number of detected topics. Therefore, accuracy versus FPPT is further proposed to handle this problem~\cite{pang2013unsupervised}.


For accuracy versus FPPT, if a topic is correctly detected, how many falsely detected topics are generated by a detection system, where accuracy is defined as follows:
\begin{equation}\label{equa:accuracy}
\text{Accuracy}=\frac{\# \text{Successful} }{\# \text{Groundtruth}}.
\end{equation}
A topic candidate $D$ is recognized as a successful detection, if Normalized Intersected Ratio (NIR) $r=\frac{|D \cap G|}{|D \cup G|}$ is larger than a threshold. Following the previous work~\cite{pang2013unsupervised}, NIR with a threshold of 0.5 is used in our experiments.
{\color[RGB]{0,0,255} In summary, the top-10 F1 focuses on measuring the partial performance of the top ranked topics. While the accuracy measures the overall performance of the detected results. }


\subsection{Methods in Comparison Study}

Our experimental goal is to compare the proposed approach with four state-of-the-art methods:

\begin{itemize}
\item [1.]\textbf{Event-Clustering Based Method (ECBM)~\cite{Cao-Ngo-Zhang-Li-2011}.} Different from the PD-based method~\cite{pang2013unsupervised}, this work~\cite{Cao-Ngo-Zhang-Li-2011} first clusters the tags from each time slice, and then both the Near Duplicated Keyframes (NDKs) and the tag-based clusters are grouped into topics. Note that this approach involves many engineering details and the tuning of the hyper-parameters. We implemented this method by ourselves and reported the best tuned results.
\item [2.]\textbf{Multi-Modality Graph (MMG)~\cite{Zhang-Li-Chu-Wang-Zhang-Huang-2013}.} This baseline belongs to the similarity-graph based method. Zhang \emph{et al.}~\cite{Zhang-Li-Chu-Wang-Zhang-Huang-2013} utilize the NDKs of video clips and the textual information to build the similarity graph~\cite{papadopoulos-11}, where graph shift (GS)~\cite{liu-icml-2009} is used to discover web topics. This method mainly depends on the noise-resistant power of GS. The comparison between MMG and our method illustrates that without designing any noise-resistant component, putting fragments into a whole is a promising approach to generate complete topics.
\item [3.]\textbf{Maximal Cliques with Poisson Deconvolution (MCPD)~\cite{pang2013unsupervised}.} Rather than using NMFR to generate topics, MCPD leverages Maximal Cliques (MCs) to generate topics. This comparison demonstrates the generalization ability of the proposed BR method across different topic patterns. Because a good generalization ability guarantees the success of the combination of  the BR method and the other topic patterns, \emph{e.g.}, spectral clustering~\cite{shi-malik-spectralcluster-pami2000}.
\item [4.]~\textbf{Latent Poisson Deconvolution (LPD)~\cite{pang-tao-lpd-icme-2016}.} This method achieves the state-of-the-art performances on both \mbox{MGC-WEBV} and YKS. Note that rather than using multiple graphs in LPD, our method only utilizes one graph to rank topics. This demonstrates that our approach can meet or even surpass the state-of-the art method without exploiting the multiple graphs.
\end{itemize}

\subsection{Analysis of Our Approach}\label{sec:sub:analysisAR3}

In this Section, we use MCG-WEBV to test the effectiveness of our method as follows:
\begin{itemize}
\item Verify the effectiveness of each component in BR;
\item The generalization ability of BR across different topic patterns;
\end{itemize}


\begin{figure}[t!]
\centering
\includegraphics[width=.45\textwidth]{./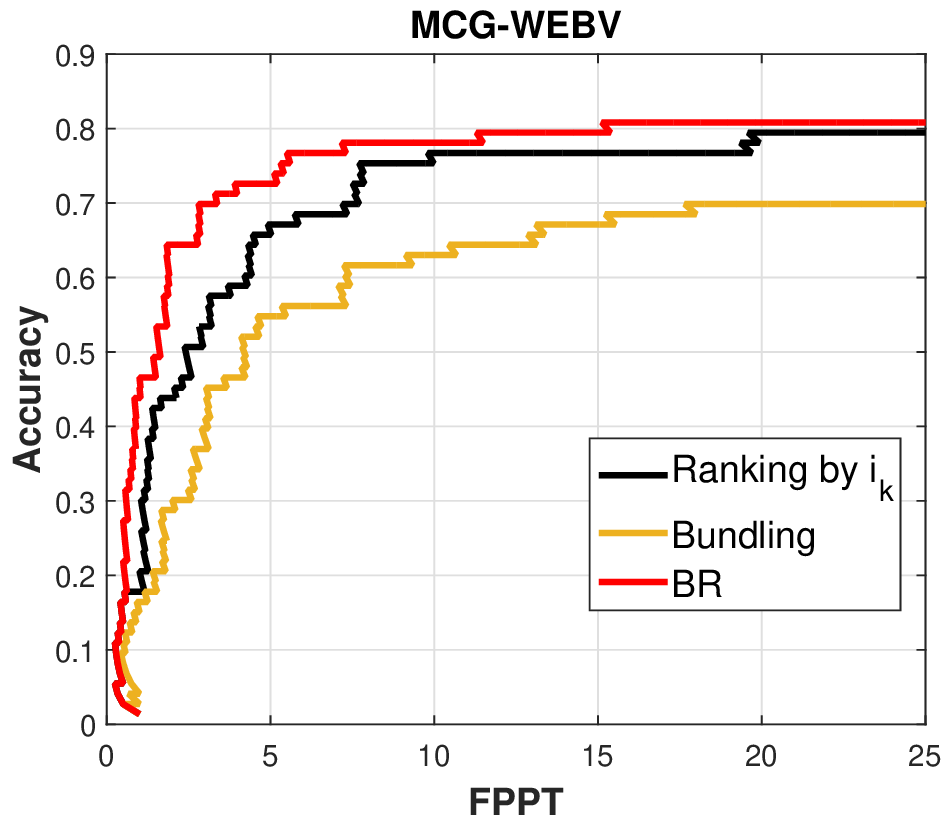}
\caption{Effectiveness of each component of BR on MCG-WEBV (best viewed in color).}
\label{fig:effectiveness_steps}
\end{figure}

\subsubsection{The Effectiveness of Each Component in BR}

Fig.~\ref{fig:effectiveness_steps} compares the effectiveness of each component of BR \emph{i.e.}, \emph{Bundling} and the combination of \emph{Bundling} and \emph{Refining} in Alg.~\ref{alg:AR3}.

In this experiment, a baseline method~\cite{pang2013unsupervised} (we term it as ``Ranking by $i_k$'') ranks topics by interestingness $i_k$. Fig.~\ref{fig:effectiveness_steps} shows that the proposed BR significantly surpasses the results of ``Ranking by $i_k$''. The comparisons demonstrate the advantages of BR in both removing fragment topics and mining more hot topics.

Moreover, Fig.~\ref{fig:effectiveness_steps} verifies the importance of the refining component in BR. The curve with the legend ``Bundling'' achieves the worst performance in Fig.~\ref{fig:effectiveness_steps}. Because some hot topics are evolved into the inaccurate ones due to the absorbed noise webpages. As expected, compared with the curve with the legend ``Bundling'', the proposed BR significantly improves the accuracy of a detection system. For example, when FPPT is equal to 5, the BR outperforms the ``Bundling'' by 20\% accuracy nearly.

In summary, although the \emph{Bundling} component in the BR puts fragments into a whole, the accuracy of coarse topics are also significantly reduced. To remedy the drawback of the \emph{Bundling} component, the~\emph{Refining} component of BR strips noise webpages from coarse topics, and brings the gain in accuracy. Therefore, both two components are very important to guarantee the performance of the proposed method.

\begin{figure}[t!]
\centering
\includegraphics[width=.45\textwidth]{./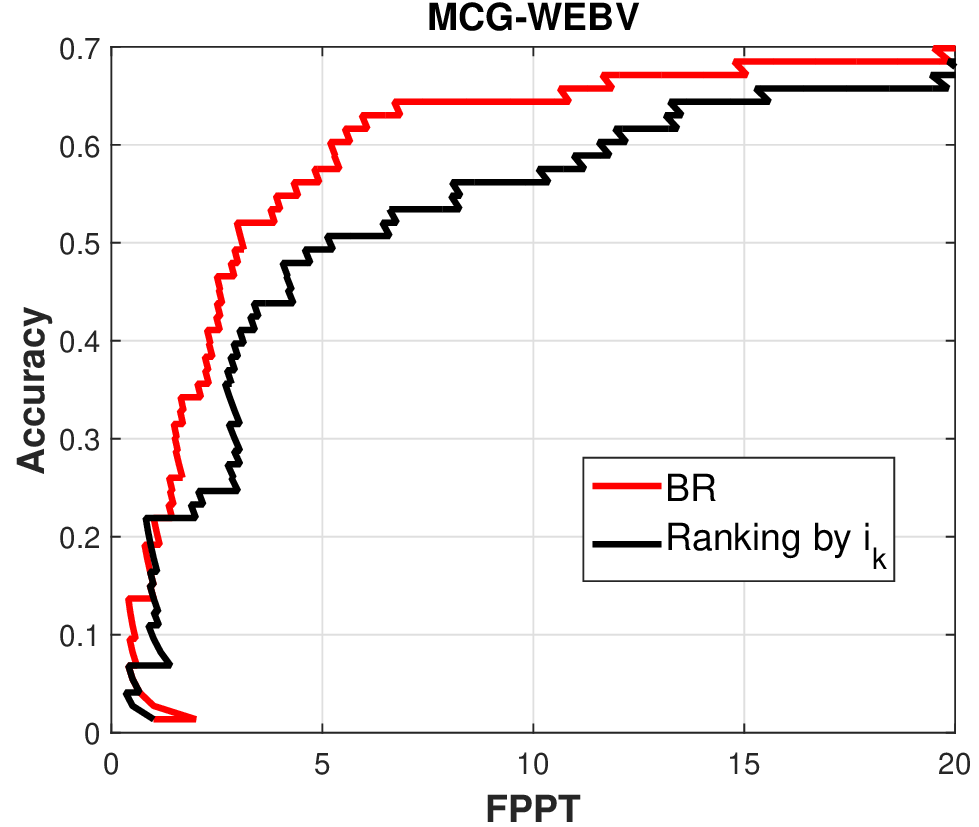}
\caption{The accuracy versus FPPT curves for the combination of MCPD and BR on MCG-WEBV (best viewed in color).}
\label{fig:generalizationoftopic_patterns}
\end{figure}

\subsubsection{The Generalization Ability of BR Across Different Types of Topics}

Fig.~\ref{fig:generalizationoftopic_patterns} illustrates the effectiveness of BR on MC-based topic pattern~\cite{pang2013unsupervised}. Concretely,
the proposed BR achieves significantly improved performance for this type of topic pattern. Note that MC, a connectable pattern, is totally different from the clusters generated by NMFR. Therefore, BR is expected to have a good generalization ability across the different topic patterns.

The generalization ability empowers the proposed BR easily extendable to other efficient topic patterns, \emph{e.g.}, spectral clustering.

%
%

\subsection{Qualitative Comparisons with State-Of-The-Art Methods}

To make the comparisons as meaningfully as possible, we use the same experimental protocols proposed by each data set.

\subsubsection{\textbf{Web-Video Topic Detection in MCG-WEBV}}

Fig.~\ref{fig:top10_f1onmcg-webv} illustrates the comparison results by the top-10 $F_1$ \emph{v.s.} NDT on MCG-WEBV. The proposed BR achieves the highest top-10 $F_1$ score than the other state-of-the-art methods. Besides, the top-10 $F_1$ scores of the BR method increase quickly with respect to the number of the generated topics. For instance, to achieve approximate 0.9 top-10 $F_1$ score, MCPD~\cite{pang2013unsupervised}, and MMG~\cite{Zhang-Li-Chu-Wang-Zhang-Huang-2013} generate 70, and 179 topics respectively, while our method only generates 20 topics. It means that a user of BR could quickly find top 10 interesting hot topics without screening more topics.

\begin{figure}[t!]
\centering
\includegraphics[width=.445\textwidth]{./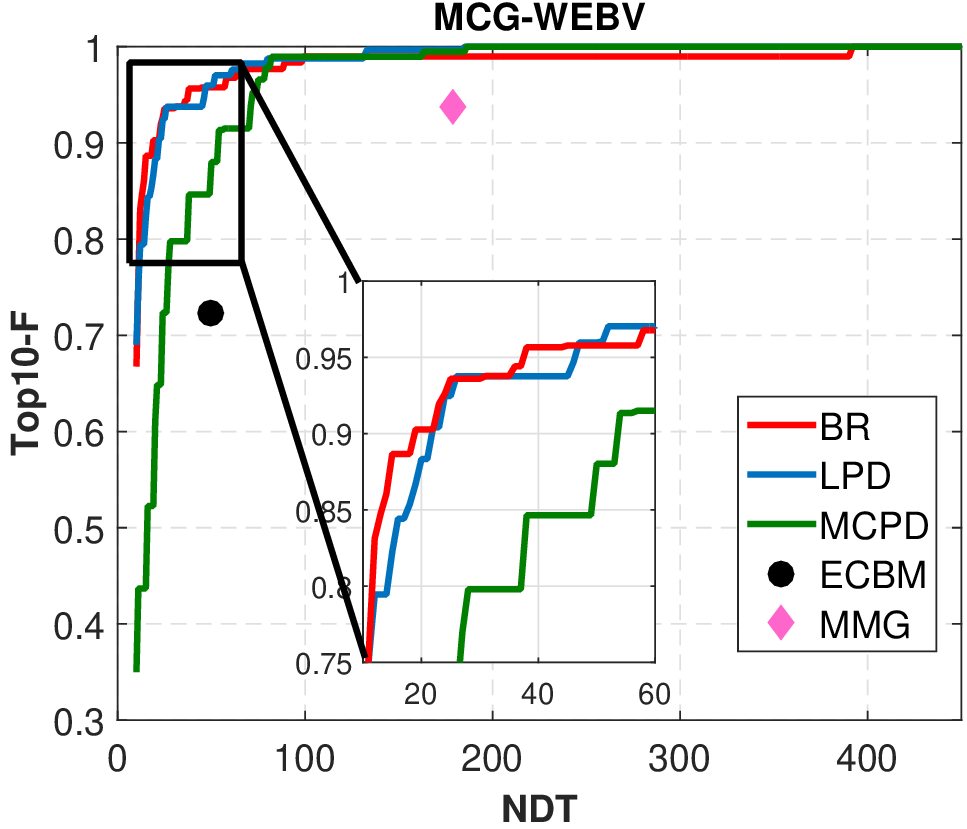}
\caption{Comparisons between the state-of-the-art methods and our method by Top-10 $F_1$ versus NDT on MCG-WEBV (best viewed in color).}
\label{fig:top10_f1onmcg-webv}
\end{figure}

In Fig.~\ref{fig:top10_f1onmcg-webv}, BR efficiently boosts the performance of the PD-based methods, even achieving a slightly better result than LPD~\cite{pang-tao-lpd-icme-2016}. The comparison between LPD and BR in Fig.~\ref{fig:top10_f1onmcg-webv} discovers two important observations: (1) efficiently removing fragment topics is very critical to web topic detection, since the number of hot topics is very small; (2) the~\emph{Refining} component in BR indeed improves the $F_1$ scores of hot topics via removing noise webpages.

\begin{figure}[h!]
\centering
\includegraphics[width=.45\textwidth]{./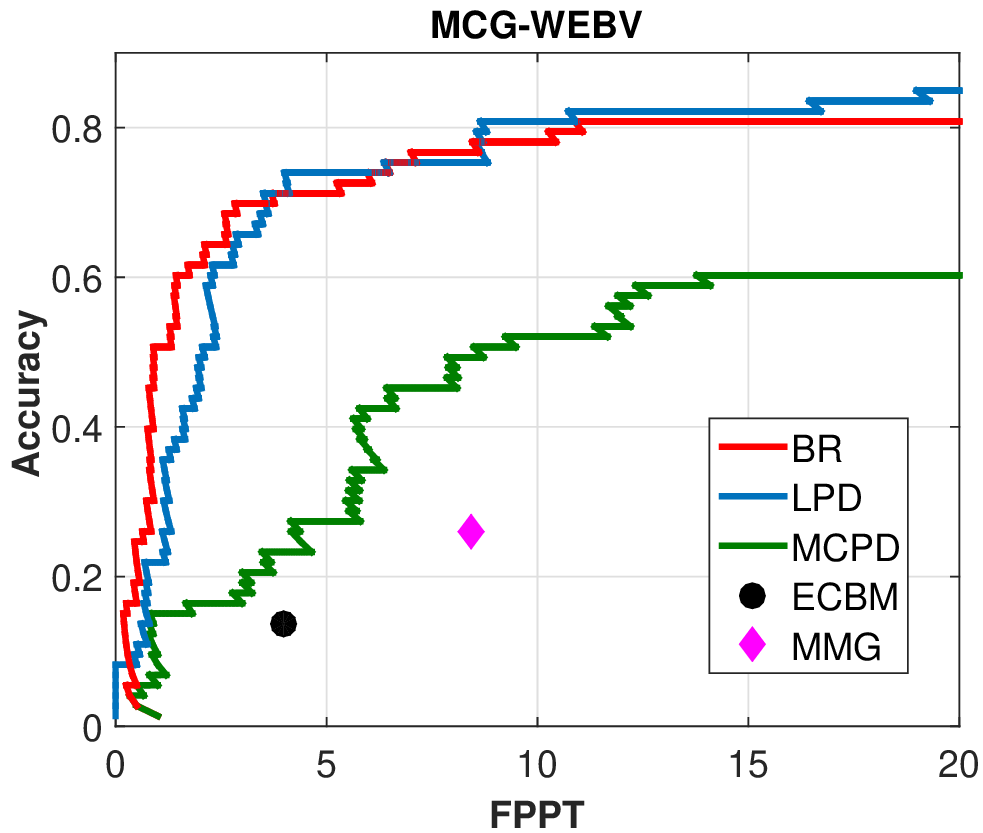}
\caption{Comparisons between the state-of-the-art methods and our method by accuracy versus FPPT on MCG-WEBV (best viewed in color).}
\label{fig:fppt_on_mcg-webv}
\end{figure}

To further evaluate the topic-wise performance, the accuracy \emph{v.s.} FPPT curves are evaluated. As illustrated in Fig.~\ref{fig:fppt_on_mcg-webv}, our approach is consistently better than MCPD~\cite{pang2013unsupervised}, MMG~\cite{Zhang-Li-Chu-Wang-Zhang-Huang-2013}, and ECBM~\cite{Cao-Ngo-Zhang-Li-2011}. For instance, when FPPT equals to 5, the accuracy of our method is 0.76; while the accuracies of MMG, ECBM and MCPD are all less than 0.3. The explanation indicates that BR is very efficient to boost the performances, as a conclusion in Section~\ref{sec:sub:analysisAR3}.

Unexpectedly, without the complex ranking method in LPD~\cite{pang-tao-lpd-icme-2016}, BR uses the same topic pattern (\emph{i.e.}, NMFR~\cite{yang2012clustering}) achieves better accuracies than LPD in Fig.~\ref{fig:fppt_on_mcg-webv}. Especially, when FPPT ranges from 0 and 5, BR significantly outperforms LPD. This indicates that both efficiently removing uninteresting topics and bundling fragments into a whole is critical than the complex topic ranking method.

\begin{figure}[h!]
\centering
\includegraphics[width=.45\textwidth]{./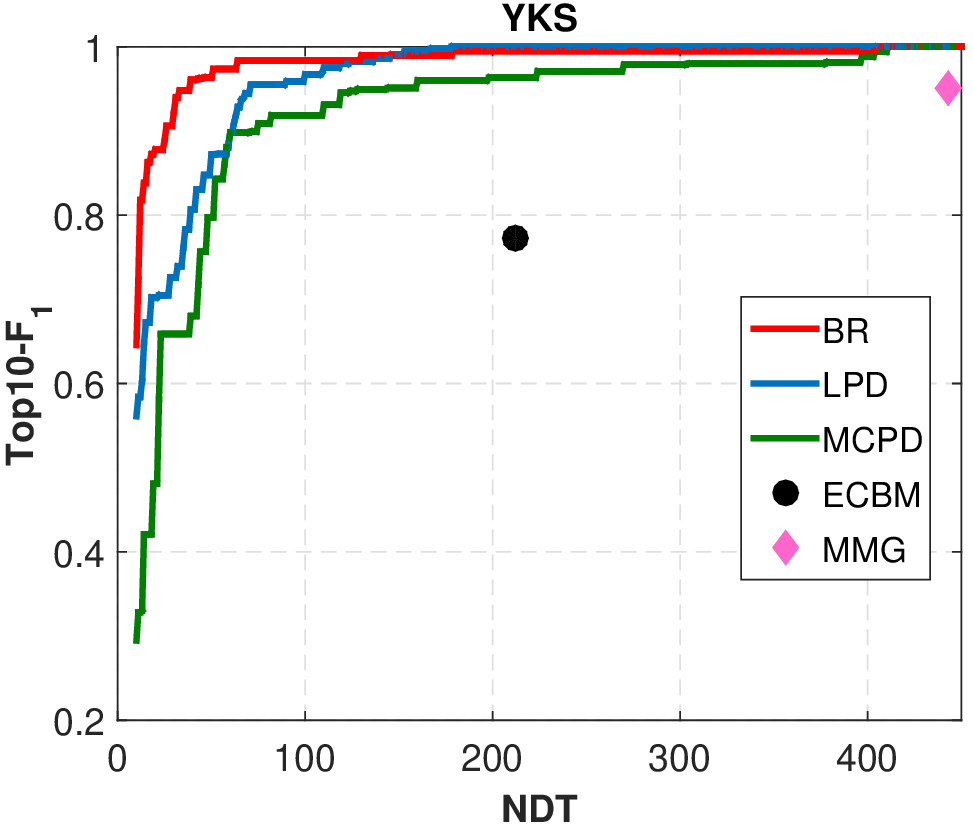}
\caption{Comparisons between the state-of-the-art methods and our method by Top-10 $F_1$ versus NDT on YKS (best viewed in color).}
\label{fig:topf1-vs-ndt-on-yks}
\end{figure}

\begin{table*}[t!]
\centering
\footnotesize{
\caption{COMPARISONS OF RUNNING TIME (IN SECONDS) AMONG DIFFERENT METHODS (millisecond).} \label{tab:runningtime}
\begin{tabular}{|c|c|c|c|c||c|c|}
\hline
\multirow{2}*{Dataset} &  \multirow{2}*{ECBM} &  \multirow{2}*{MMG} &   \multirow{2}*{MCPD} &  \multirow{2}*{LPD} & \multicolumn{2}{c|}{BR} \cr\cline{6-7}
& &    & &  & Ranking+Bundling & Refining   \cr \hline\hline
 MCG-WEBV & 146 & \textbf{15} & 69,120  & 6,297 & 208  & 17\cr\hline
 YKS & 434 & \textbf{252} & 95,040  & 17,474  &  258 & 62 \cr \hline
\end{tabular}
}
\end{table*}

\subsubsection{\textbf{Web Topic Detection in YKS}}

YKS, a cross-platform data set, requires to grasp more diverse types of the topic patterns than MCG-WEBV. Fig.~\ref{fig:topf1-vs-ndt-on-yks} shows that our method consistently outperforms MCPD~\cite{pang2013unsupervised}, MMG~\cite{Zhang-Li-Chu-Wang-Zhang-Huang-2013}, ECBM~\cite{Cao-Ngo-Zhang-Li-2011}, and LPD~\cite{pang-tao-lpd-icme-2016}. For instance, both LPD~\cite{pang-tao-lpd-icme-2016} and MCPD~\cite{pang2013unsupervised} have to generate approximate 120 topics and about 400 ones in order to archive 1 top-10 $F_1$, respectively. In contrast, our approach only generates about 70 topics to achieve the same top-10 $F_1$ score. Obviously the improved results indicate that BR accurately refines topics in the \emph{Refining} component.

\begin{figure}[h!]
\centering
\includegraphics[width=.45\textwidth]{./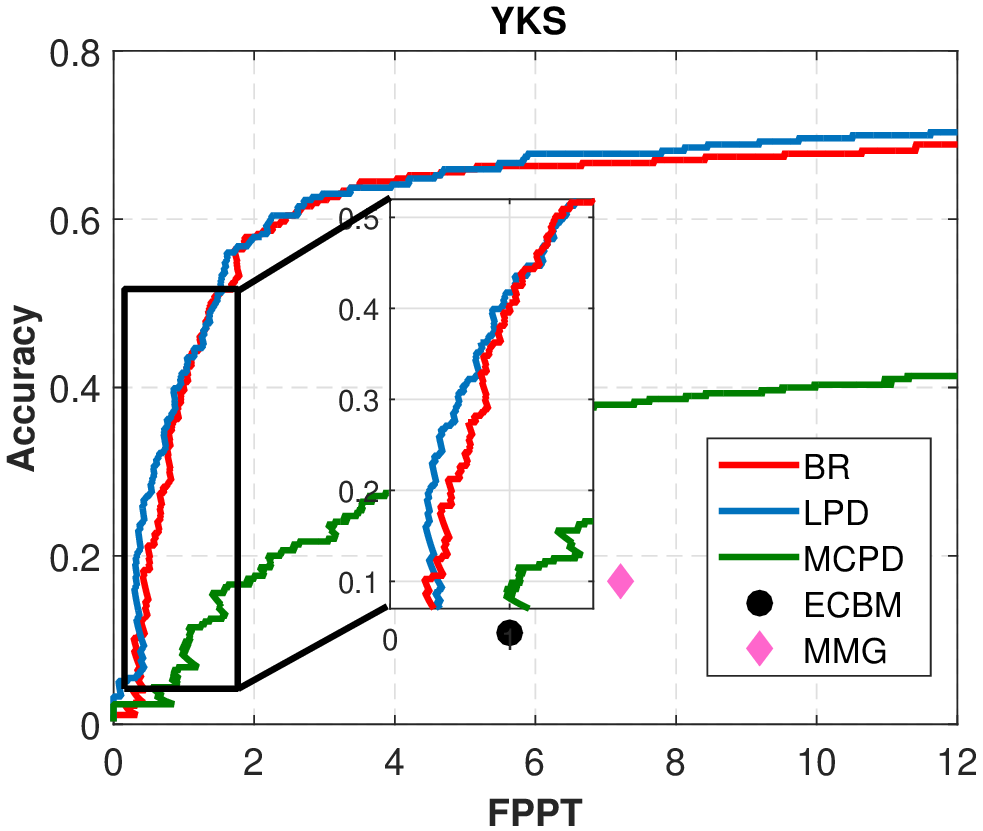}
\caption{Comparisons between the state-of-the-art methods and our method by accuracy versus FPPT on YKS (best viewed in color).}
\label{fig:accuracy-vs-fppt-on-yks}
\end{figure}

Fig.~\ref{fig:accuracy-vs-fppt-on-yks} further illustrates the accuracy versus FPPT curves on YKS. Our approach consistently outperforms MCPD~\cite{pang2013unsupervised}, MMG~\cite{Zhang-Li-Chu-Wang-Zhang-Huang-2013}, and ECBM~\cite{Cao-Ngo-Zhang-Li-2011}, and is comparable to the state-of-the-art method LPD~\cite{pang-tao-lpd-icme-2016}. Although without adopting the complex ranking method as LPD, BR achieves very similar results to LPD. However, when FPPT is equal to 12, BR achieves a slightly worse performance than LPD. It indicates that a very small portion of the topics has been wrongly refined.

\subsection{Efficiency}

To give readers an intuition about the efficiency, we report the empirical running time for different methods. Our experiments are conducted on a server (Windows) with Intel Core
CPU with 3.2 GHz and 32 GB main memory. The source code was implemented with Matlab.
Table~\ref{tab:runningtime} compares that the running time among each methods.

Although MMG~\cite{Zhang-Li-Chu-Wang-Zhang-Huang-2013} achieves the lowest value of running time, MMG are worse than the PD-based methods (\emph{i.e.}, MCPD~\cite{pang2013unsupervised}, LPD~\cite{pang-tao-lpd-icme-2016} and BR), in terms of both the accuracy and Top-10 $F_1$.

During the generation of topic candidates, LPD almost costs \mbox{2$\times$} time than that of BR, since LPD utilizes two graphes to generate topic candidates while BR just uses one graph; besides, ranking topics in LPD costs about \mbox{27.9$\times$} on MCG-WEBV (about \mbox{54.6$\times$} on YKS) time than the combination of Ranking, Bundling and Refining in BR. On the other side, in terms of effectiveness, BR surpasses LPD on MCG-WEBV, and meets LPD on YKS. In summary, the proposed BR not only at least meets LPD in terms of effectiveness, but also significantly overpasses LPD in terms of efficiency.

\section{Conclusion}\label{sec:conclusion}

In this paper, BR is proposed to boost performances of the DBR approach. BR consists of the \emph{Bundling} and the \emph{Refining} steps. The \emph{Bundling} step puts fragment topics into a whole yet coarse topic; while, the \emph{Refining} step refines a coarse topic into a fine one. Considering than the \emph{Refining} step is equivalent to strip noise webpages, without the complex training process, the proposed submodular selection adaptively refines coarse topics into more accurate, compact and coherent ones. Experiments show that the proposed method meets or surpasses the state-of-the-art methods in terms of both effectiveness and efficiency.

The promising results of this paper motivate a further examination of the components of the proposed BR. A high-efficient method to generate topic candidates should be investigated. For instance, an incremental method to generate topic candidates on a graph~\cite{Held2016DynamicCI} may remedy the drawback of the off-line clustering method here.

\section{Appendix}

The proof of Remark~\ref{rmk:boundedthreshold} is as follows:
\begin{proof}
Let $g_\mathbf{m}^t$ denotes the value of~\eqref{eqt:discrete-discrete} which selects the $\mathbf{m}$ webpage at the $t$-th selection, and $g_\mathbf{n}^{t+1}$ be the value of the selection of the $\mathbf{n}$ webpage, we have:
\begin{equation*}
\begin{split}
g_\mathbf{m}^t &= \max_\mathbf{m} \left( \triangle(\mathbf{p}_\mathbf{m}|\mathcal{P}) \right) \\
&= \lambda \pi_\mathbf{m}- \left( \sum_{i \in \mathcal{P}}\pi_i D_{i\mathbf{m}} \pi_\mathbf{m} + \sum_{j \in \mathcal{P}}\pi_\mathbf{m} D_{\mathbf{m}j} \pi_j \right)\\
& \geq \lambda \pi_\mathbf{n}- \left( \sum_{i \in \mathcal{P}}\pi_i D_{i\mathbf{n}} \pi_\mathbf{n} + \sum_{j \in \mathcal{P}}\pi_\mathbf{n} D_{\mathbf{n}j} \pi_j \right)\\
&\geq \lambda \pi_\mathbf{n}- \left( \sum_{i \in \mathcal{P} \cup \{\mathbf{m}\}}\pi_i D_{i\mathbf{n}} \pi_\mathbf{n} + \sum_{j \in \mathcal{P} \cup \{ \mathbf{m}\} }\pi_\mathbf{n} D_{\mathbf{n}j} \pi_j \right)\\
& = \max_\mathbf{n} \left( \triangle(\mathbf{p}_\mathbf{n}|\mathcal{P} \cup \{\mathbf{m}\} ) \right) = g_\mathbf{n}^{t+1}
\end{split}
\end{equation*}

By above inequality and the condition $g_\mathbf{m}^t> 0$, we have:
 \begin{equation*}
 \begin{split}
 \because \   \ & 0 \leq  g_\mathbf{m}^t - g_\mathbf{n}^{t+1} \leq g_\mathbf{m}^t, \\
 \therefore \   \ & 0 \leq  \frac{g_\mathbf{m}^t - g_\mathbf{n}^{t+1}}{g_\mathbf{m}^t}  \leq1.
\end{split}
\end{equation*}
\end{proof}

\section*{References}
\bibliography{sigproc,tmmtopic,icme}

@inproceedings{Shi-kan-www-short-text-18,
 author = {T. Shi and K. Kang and J. Choo and C. Reddy},
 title = {Short-Text Topic Modeling via Non-negative Matrix Factorization Enriched with Local Word-Context Correlations},
 booktitle = {Proceedings of the 2018 World Wide Web Conference},
 year = {2018},
 pages = {1105--1114}
 }

@ARTICLE{pang-hu-interretation-topic-TCY-19,
author={J. {Pang} and A. {Hu} and Q. {Huang} and Q. {Tian} and B. {Yin}},
journal={IEEE Transactions on Cybernetics},
title={Increasing Interpretation of Web Topic Detection via Prototype Learning From Sparse Poisson Deconvolution},
year={2019},
volume={49},
number={3},
pages={1072-1083}
}

@ARTICLE{pang-tao-cpd-tnnls-19,
author={J. {Pang} and F. {Tao} and Q. {Huang} and Q. {Tian} and B. {Yin}},
journal={IEEE Transactions on Neural Networks and Learning Systems},
title={Two Birds With One Stone: A Coupled Poisson Deconvolution for Detecting and Describing Topics From Multimodal Web Data},
year={2019},
volume={30},
number={8},
pages={2397-2409}
}

@article{pandove-goel-rani-review-high-dimension-18,
  author    = {D. Pandove and S. Goel and R. Rani},
  title     = {Systematic Review of Clustering High-Dimensional and Large Datasets},
  journal   = {ACM Transactions on Knowledge Discovery from Data},
  volume    = {12},
  year      = {2018},
  pages     = {4 ¨C 68}
}

@article{byrka-pensyl-rybicki-srinivasan-trinh-budgeted-17,
  author    = {J. Byrka and T. Pensyl and B. Rybicki and A. Srinivasan and K. Trinh},
  title     = { An improved approximation for k-median and positive correlation in budgeted optimization},
  journal   = {ACM Transactions on Algorithms},
  volume    = {13},
  number = {2},
  year      = {2017},
  pages     = {23:1¨C23:31}
}

@INPROCEEDINGS{he-uncertainty-loss-nms-19,
author={Y. {He} and C. {Zhu} and J. {Wang} and M. Savvides and X. Zhang},
booktitle={IEEE Conference on Computer Vision and Pattern Recognition},
title={Bounding Box Regression With Uncertainty for Accurate Object Detection
},
year={2019},
volume={},
number={},
pages={2888-2897}
}

@INPROCEEDINGS{hosang-learning-nms-17,
author={J. {Hosang} and R. {Benenson} and B. {Schiele}},
booktitle={IEEE Conference on Computer Vision and Pattern Recognition},
title={Learning Non-maximum Suppression},
year={2017},
volume={},
number={},
pages={6469-6477}
}

@article{Held2016DynamicCI,
  title={Dynamic Clustering in Social Networks Using Louvain and Infomap Method},
  author={Pascal Held and Benjamin Krause and Rudolf Kruse},
  journal={2016 Third European Network Intelligence Conference (ENIC)},
  year={2016},
  pages={61-68}
}

@inproceedings{aleksander-yves-kdd17,
 author = {B. Aleksandar and M. Yves and G. Stephan},
 title = {Robust Spectral Clustering for Noisy Data: Modeling Sparse Corruptions Improves Latent Embeddings},
 booktitle = {Proceedings of ACM International Conference on Knowledge Discovery and Data Mining},
 year = {2017},
 pages = {737-746}
 }

@inproceedings{kamnitsas-SSL-imcl-18,
  author    = {K. Kamnitsas and D. Castro and
               L.L. Folgoc and
               I. Walker and
               R. Tanno and
               D. Rueckert and
               B. Glocker and
               A. Criminisi and
               A.V. Nori},
  title     = {Semi-Supervised Learning via Compact Latent Space Clustering},
  booktitle = {Proceedings of the 35th International Conference on Machine Learning},
  pages     = {2464-2473},
  year      = {2018}
}

@article{shi-malik-spectralcluster-pami2000,
  author    = {J. Shi and J. Malik},
  title     = {Normalized Cuts and Image Segmentation},
  journal = {IEEE Trans. on Pattern Analysis and Machine Intellgience},
  year      = {2000},
  pages     = {139-161},
  volume = {22},
  number ={8}
}

@article{lin-aimi2016,
  author    = {Z. Lin},
  title     = {A review on low-rank models in data analysis},
  journal = {Big Data \& Information Analytics},
  year      = {2016},
  pages     = {139-161},
  volume = {1},
  number ={2 \& 3}
}

@article{bakshy-messing-adamic-science2015,
  author    = {E. Bakshy and S. Messing and L. A. Adamic},
  title     = {Exposure to ideologically diverse news and opinion on Facebook},
  journal = {Science},
  year      = {2015},
  pages     = {367-381},
  volume = {348},
  number ={6239}
}

@inproceedings{page-brin-motwani-pagerank-99,
  author    = {L. Page and S. Brin and R. Motwani and T. Winograd},
  title     = {The pagerank citation ranking: Bringing order to the web},
  booktitle = {Technical report, Stanford InfoLab},
  year      = {1999}
}

@inproceedings{qian-zhang-xu-mm16,
  author    = {S. Qian and T. Zhang and C. Xu},
  title     = {Multi-modal Multi-view Topic-opinion Mining for Social Event Analysis},
  booktitle = {Proceedings of the {ACM} Conference on Multimedia Conference},
  pages     = {2-11},
  year      = {2016}
}

@inproceedings{Maurus-kdd16,
  author    = {S. Maurus and C. Plant},
  title     = {Skinny-dip: Clustering in a Sea of Noise},
  booktitle = {Proceedings of the 22nd {ACM} {SIGKDD} International Conference on
               Knowledge Discovery and Data Mining},
  pages     = {1055--1064},
  year      = {2016}
}

@article{li-joo-qi-zhu-tmm2016,
  author    = {W. Li and J. Joo and H. Qi and S.-C. Zhu},
  title     = {Joint Image-Text News Topic Detection and Tracking by Multimodal Topic And-Or Graph},
  journal = {IEEE Trans. on Multimedia},
  year      = {2016},
  pages     = {367-381},
  volume = {19},
  number ={2}
}

@article{Cao-Ngo-Zhang-Li-2011,
  title={Tracking Web Video Topics: Discovery, Visualization, and Monitoring},
  author={Cao, Juan and Ngo, Chong Wah and Zhang, Yong Dong and Li, Jin Tao},
  journal={IEEE Transactions on Circuits  Systems for Video Technology},
  volume={21},
  number={12},
  pages={1835-1846},
  year={2011},
}

@article{Sanchez-ijcv-2013,
  title={Image Classification with the Fisher Vector: Theory and Practice},
  author={Sánchez, Jorge and Perronnin, Florent and Mensink, Thomas and Verbeek, Jakob},
  journal={International Journal of Computer Vision},
  volume={105},
  number={3},
  pages={222-245},
  year={2013},
}

@article{cheung-video-signature-tcsvt-2003,
  title={Efficient video similarity measurement with video signature},
  author={Cheung, Senching Samson and Zakhor, Avideh},
  journal={IEEE Transactions on Circuits  Systems for Video Technology},
  volume={13},
  number={1},
  pages={59-74},
  year={2003},
}

@article{pang2013unsupervised,
  title={Unsupervised Web Topic Detection Using A Ranked Clustering-Like Pattern Across Similarity Cascades},
  author={Pang, Junbiao and Jia, Fei and Zhang, Chunjie and Zhang, Weigang},
  journal={IEEE Transactions on Multimedia},
  volume={17},
  number={6},
  pages={1-1},
  year={2015},
}

@article{yang2012clustering,
  title={Clustering by Nonnegative Matrix Factorization Using Graph Random Walk},
  author={Yang, Zhirong and Hao, Tele and Dikmen, Onur and Chen, Xi and Oja, Erkki},
  journal={Advances in Neural Information Processing Systems},
  pages={1079-1087},
  year={2012},
}

@article{blei-jmlr-03,
  title={Latent dirichlet allocation},
  author={Blei, David M and Ng, Andrew Y and Jordan, Michael I},
  journal={Journal of Machine Learning Research},
  volume={3},
  pages={993-1022},
  year={2003},
}

@InProceedings{vaswani-2017-transformer-nips,
author="A. Vaswani and N. Shazeer and N. Parmar and J. Uszkoreit 
and L. Jones and A. N. Gomez and L. Kaiser and I. Polosukhin",
title="Attention is all you need",
booktitle="Advances in Natural Computation, Fuzzy Systems and Knowledge Discovery",
year="2017",
volumes ="30",
pages="457-470"
}

@inproceedings{wang-2023-text2topic-EmNLP,
   title = "{T}ext2{T}opic: Multi-Label Text Classification System for Efficient Topic Detection in User Generated Content with Zero-Shot Capabilities",
   author = "F. Wang and M. Beladev  and
     O. Kleinfeld and
     E. Frayerman and
     T. Shachar  and
     E. Fainman and
     K. Lastmann Assaraf and
     S. Mizrachi  and
     B. Wang",
   booktitle = "Proceedings of the 2023 Conference on Empirical Methods in Natural Language Processing: Industry Track",
   year = "2023",
  pages = "93-103"
}

@article{PARK2023esa-hottext-topic,
title = {Hot topic detection in central bankers’ speeches},
journal = {Expert Systems with Applications},
volume = {230},
pages = {120563},
year = {2023},
issn = {0957-4174},
author = {J. Park and H. Jin Lee and S. Cho}
}

@InProceedings{lin2022-cfskd-conf,
author="X. Li and Y. Gao and J. Tian",
title="Multi-label Text Classification Optimization Model Fusing ALBERT and LDA",
booktitle="Advances in Natural Computation, Fuzzy Systems and Knowledge Discovery",
year="2022",
pages="457-470"
}

@article{liu2018esa-text-image,
title = {MMDF-LDA: An improved Multi-Modal Latent Dirichlet Allocation model for social image annotation},
journal = {Expert Systems with Applications},
volume = {140},
pages = {168-184},
year = {2018},
author = {Z. Liu and C. Zhang and C. Chen}
}

@InProceedings{seema2023cmtd,
author="R. Seema and K. Mukesh",
title="Cross-Media Topic Detection: Approaches, Challenges, and Applications",
booktitle="Computer Vision and Machine Intelligence",
year="2023",
pages="565-576"
}

@article{zhang2024kbs,
title = {Cross-media web video topic detection based on heterogeneous interactive tensor learning},
journal = {Knowledge-Based Systems},
volume = {283},
pages = {111153},
year = {2024},
author = {C. Zhang and K. Mei and X. Xiao},
}

@inproceedings{Zhang-Li-Chu-Wang-Zhang-Huang-2013,
  author    = "Y. Zhang and G. Li and L. Chu and S. Wang and W. Zhang and Q. Huang",
  title     = "Cross-media topic detection: a multi-modality fusion framework",
  booktitle = "ICME",
  year      = "2013",
  pages     = "1-6"
}

@inproceedings{hofmann-plsa-99,
  author    = {T. Hofmann},
  title     = {Probabistical latent semantic indexing},
  booktitle = {ACM SIGIR conference on Research and development in information retrieval},
  year      = {1999},
  pages     = {50-57}
}

@article{kleinberg-random-walk-ACM-99,
  author    = {J. Kleinberg},
  title     = {Authoritative sourves in a hyberlinked enviroment},
  journal = {Journal of the ACM},
  year      = {1999},
  pages     = {604-632},
  volume = {46},
  issue ={5}
}

@article{pang-tao-nucom-describe-topic-2017,
  author    = {J. Pang and F. Tao and L. Li and Q. Huang and B. Yin and Q. Tian},
  title     = {A two-step approach to describing web topics via probable keywords and prototype images from background-removed similarities},
  journal = {Neurocomuting},
  year      = {2018},
volume = {275},
pages ={478-487}

}

@article{pang-tao-lpd-icme-2016,
  author    = {J. Pang and F. Tao and C. Zhang and W. Zhang and Q. Huang and B. Yin},
  title     = {Robust latent Poisson deconvolutoin from multiple features for web topic detection},
  journal = {IEEE Trans. on Multimedia},
  year      = {2016},
  pages     = {2482-2493},
  volume = {18},
  number ={21}
}

@ARTICLE{ACM-data-density-23,
  author={X. Huang and T. Ma},
  journal={Proceedings of the ACM on Management of Data},
  title={Fast Density-Based Clustering: Geometric Approach},
  year={2023},
  volume={1},
  number={1},
  pages={1-24}
  }

@article{nemhauser-wolsey-fisher-submodular-73,
  author    = {G. Nemhauser and L. Wolsey and M. Fisher},
  title     = {An analysis of approximations for maximizing submodular set functions},
  journal   = {Mathematical Programming},
  volume    = {1},
  year      = {1973},
  pages     = {265-294}
}

@article{papadopoulos-11,
  author    = {S. Papadopoulous and C. Zigkolis and Y. Kompatsiaris and A. Vakali},
  title     = {Cluster-based landmark and event detection on tagged photo collections},
  journal   = {IEEE Multimedia},
  volume    = {18},
  number    = {1},
  year      = {2011},
  pages     = {52-63}
}

@article{meyer-langville-pagerank-04,
  author    = {C. Meyer and N. Langville},
  title     = {Deeper inside pagerank},
  journal   = {Internet Mathematics},
  volume    = {1},
  number    = {3},
  year      = {2003},
  pages     = {335-380}
}

@inproceedings{baldwin-CoNLL-2009,
  author    = {B. Han and P. Cook and T. Baldwin},
  title     = {Automatically constructing a normalisation for microblogs},
  booktitle = {Joint Conference on empirical methods in natural Languages Processing and Computational Natural Language Learing},
  year      = {2012},
  pages     = {421-432}
}

@inproceedings{liu-icml-2009,
  author    = {Y. Liu and A. Niculescu-Mizil and W. Gryc},
  title     = {Topic-link LDA: joint models of topic and author community},
  booktitle = {International Conference on Machine Learning},
  year      = {2009},
  pages     = {338-349}
}

@inproceedings{zhao-ecir-2011,
  author    = {W. X. Zhao and J. Jiang and J. Weng and J. He and E.-P. Lim and H. Yan and X. Li},
  title     = {Comparing twiter and traditional media using topic models},
  booktitle = {European conference on Advances in information retrieval},
  year      = {2011},
  pages     = {338-349}
}

\end{document}